\newcommand{\tinyspace}{\mspace{1mu}}
\newcommand{\abs}[1]{\left\lvert\tinyspace #1 \tinyspace\right\rvert}
\newcommand{\setft}[1]{\mathrm{#1}}
\newcommand{\density}[1]{\setft{D}\left(#1\right)}
\def\I{\mathbb{1}}
\newenvironment{mylist}[1]{\begin{list}{}{
    \setlength{\leftmargin}{#1}
    \setlength{\rightmargin}{0mm}
    \setlength{\labelsep}{2mm}
    \setlength{\labelwidth}{8mm}
    \setlength{\itemsep}{0mm}}}
    {\end{list}}
\def\ot{\otimes}
\newcommand{\out}[2]{| #1\rangle\langle #2 |}
\newcommand{\Innerm}[3]{\left\langle #1 \left| #2 \right| #3 \right\rangle}
\newcommand{\Pa}[1]{\left(#1\right)}
\newcommand{\Br}[1]{\left[#1\right]}
\newcommand{\set}[1]{\{#1\}}
\newcommand{\Set}[1]{\left\{#1\right\}}
\newcommand{\ket}[1]{|#1\rangle}
\DeclareMathOperator{\trace}{Tr}
\newcommand{\Ptr}[2]{\trace_{#1}\Pa{#2}}
\newcommand{\Tr}[1]{\Ptr{}{#1}}
\def\cE{\mathcal{E}}
\def\cH{\mathcal{H}}
\def\rC{\mathrm{C}}
\def\rF{\mathrm{F}}
\def\rS{\mathrm{S}}
\newtheorem{thrm}{Theorem}[section]
\newtheorem{lem}[thrm]{Lemma}
\theoremstyle{definition}
\newtheorem{remark}[thrm]{Remark}
\numberwithin{equation}{section}
\newcounter{questionnumber}
\begin{document}

\title{\large\bf Universal upper bound for the Holevo information
induced by a quantum operation}

\author{Lin Zhang$^\text{a}$\footnote{E-mail: godyalin@163.com;
linyz@hdu.edu.cn},\; Junde Wu$^\text{b}$\footnote{E-mail: wjd@zju.edu.cn},\;
Shao-Ming Fei$^\text{c}$\footnote{ E-mail: feishm@mail.cnu.edu.cn}\\
  {\small $^\text{a}$\it Institute of Mathematics, Hangzhou Dianzi University, Hangzhou 310018, PR~China}\\
  {\small $^\text{b}$\it Department of Mathematics, Zhejiang University, Hangzhou 310027,
  PR~China}\\
  {\small $^\text{c}$\it School of Mathematical Sciences, Capital Normal
University, Beijing 100048, PR~China}}

\date{}
\maketitle \mbox{}\hrule\mbox\\
\begin{abstract}

Let $\cH_A\ot \cH_B$ be a bipartite system and $\rho_{AB}$ a quantum
state on $\cH_A\ot \cH_B$, $\rho_A = \Ptr{B}{\rho_{AB}}$, $\rho_B =
\Ptr{A}{\rho_{AB}}$. Then each quantum operation $\Phi_B$ on the quantum
system $\cH_B$ can induce a quantum ensemble
$\set{(p_\mu,\rho_{A,\mu})}$ on quantum system~$\cH_A$. In this
paper, we show that the Holevo quantity
$\chi\set{(p_\mu,\rho_{A,\mu})}$ of the quantum ensemble
$\set{(p_\mu,\rho_{A,\mu})}$ can be upper bounded by both subsystem entropies.
By using the result, we answer partly a conjecture of Fannes, de
Melo, Roga and
\.{Z}yczkowski.\\[0.1cm]

\noindent\textbf{Keywords:} Quantum state, Quantum operation, von
Neumann entropy,  Holevo quantity.

\end{abstract}
\mbox{}\hrule\mbox\\

\section{Introduction and preliminaries}

Let $\cH$ be a finite dimensional complex Hilbert space. A
\emph{quantum state} $\rho$ on $\cH$ is a positive semi-definite
operator of trace one, in particular, for each unit vector
$\ket{\psi} \in \cH$, the operator $\rho = \out{\psi}{\psi}$ is said
to be a \emph{pure state}. The set of all quantum states on $\cH$ is
denoted by $\density{\cH}$. For each quantum state
$\rho\in\density{\cH}$, its von Neumann entropy is defined by
$\rS(\rho) = - \Tr{\rho\log_2\rho}$. A \emph{quantum operation}
$\Phi$ on $\cH$ is a trace-preserving completely positive linear
mapping defined over the set $\density{\cH}$. It follows from
(\cite[Prop.~5.2 and Cor.~5.5]{Watrous}) that there exist linear
operators $\{M_\mu\}_{\mu=1}^K$ on $\cH$ such that $\sum_{\mu=1}^K
M^\dagger_\mu M_\mu = \I$ and $\Phi = \sum_\mu \mathrm{Ad}_{M_\mu}$,
that is, for each quantum state $\rho$, we have the Kraus
representation
\begin{eqnarray*}
\Phi(\rho) = \sum_{\mu=1}^K M_\mu \rho M^\dagger_\mu.
\end{eqnarray*}

\vskip0.1in

Let $\cE = \set{(p_\mu,\rho_\mu)}$ be a quantum ensemble on $\cH$,
that is, each $\rho_\mu\in \density{\cH}$, $p_\mu > 0$, and
$\sum_{\mu} p_\mu=1$.  The \emph{Holevo quantity} of the quantum
ensemble $\Set{\Pa{p_\mu,\rho_\mu}}$ is defined by the following
expression:
\begin{eqnarray}
\chi\Set{\Pa{p_\mu,\rho_\mu}} = \rS(\sum_\mu p_\mu \rho_\mu) -
\sum_\mu p_\mu \rS\Pa{\rho_\mu}.
\end{eqnarray}

\vskip0.1in

Let $\cH_A\ot \cH_B$ be a bipartite system and $\rho_{AB}$ a quantum
state on $\cH_A\ot \cH_B$, $\rho_A = \Ptr{B}{\rho_{AB}}$, $\rho_B =
\Ptr{A}{\rho_{AB}}$, $\Phi_B = \sum_\mu \mathrm{Ad}_{M_{B,\mu}}$ be
a quantum operation on quantum system $\cH_B$. Then $\Phi = \sum_\mu
\mathrm{Ad}_{\I_A\ot M_{B,\mu}}$ is a quantum operation on the
bipartite system $\cH_A\ot \cH_B$.

Let
$$
p_\mu = \Tr{\Pa{\I_A \ot M_{B,\mu}}\rho_{AB}\Pa{\I_A \ot
{M^\dagger_{B,\mu}} }}.
$$
Then $p_\mu\geqslant 0$ and $\sum_\mu p_\mu = 1$. Without loss of
generality, we assume that $p_\mu >0$. Let
\begin{eqnarray*}
\rho_{A,\mu} &\equiv& p_\mu^{-1}\Ptr{B}{\Pa{\I_A \ot
M_{B,\mu}}\rho_{AB}\Pa{\I_A \ot
{M^\dagger_{B,\mu}}}}\\
&=&p_\mu^{-1}\Ptr{B}{\Pa{\I_A \ot \sqrt{{M^\dagger_{B,\mu}}
M_{B,\mu}}}\rho_{AB} \Pa{\I_A \ot \sqrt{{M^\dagger_{B,\mu}}
M_{B,\mu}}}}.
\end{eqnarray*}
Then $\rho_{A,\mu}$ is a quantum state on $\cH_A$. Thus, quantum
operation $\Phi_B$ induced a quantum ensemble
$\set{(p_\mu,\rho_{A,\mu})}$ on quantum system $\cH_A$.

\vskip0.1in

In this Letter, the following result is obtained:

\begin{thrm}\label{th:main1}
$\chi\set{(p_\mu,\rho_{A,\mu})} \leqslant
\min\Set{\rS(\rho_A),\rS(\rho_B)}$.
\end{thrm}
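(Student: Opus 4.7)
The plan is to split the claim into the two bounds $\chi\{(p_\mu,\rho_{A,\mu})\}\leqslant\rS(\rho_A)$ and $\chi\{(p_\mu,\rho_{A,\mu})\}\leqslant\rS(\rho_B)$ and prove each separately. The first follows from a direct calculation once one notes that the average of the induced ensemble is simply $\rho_A$; the second requires a purification of $\rho_{AB}$ together with the data processing inequality for the Holevo quantity.

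For the bound by $\rS(\rho_A)$, I would compute $\sum_\mu p_\mu\rho_{A,\mu}=\Ptr{B}{(\op{id}_A\ot\Phi_B)(\rho_{AB})}=\rho_A$, where the last equality uses trace-preservation of $\Phi_B$ (partial trace after a trace-preserving map on $B$ equals partial trace). Substituting into the definition of the Holevo quantity yields $\chi=\rS(\rho_A)-\sum_\mu p_\mu\rS(\rho_{A,\mu})\leqslant\rS(\rho_A)$, since each $\rS(\rho_{A,\mu})\geqslant 0$.

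For the bound by $\rS(\rho_B)$, I would enlarge the picture. Introduce a purifying system $\cH_R$ and a unit vector $\ket{\psi}\in\cH_A\ot\cH_B\ot\cH_R$ with $\Ptr{R}{\out{\psi}{\psi}}=\rho_{AB}$, and for each $\mu$ let $\ket{\phi_\mu}\defeq p_\mu^{-1/2}(\I_A\ot\sqrt{N_\mu}\ot\I_R)\ket{\psi}$ with $N_\mu\defeq M_{B,\mu}^\dagger M_{B,\mu}$. The state $\sigma_\mu\defeq\Ptr{B}{\out{\phi_\mu}{\phi_\mu}}$ lives on $\cH_A\ot\cH_R$ and satisfies $\Ptr{R}{\sigma_\mu}=\rho_{A,\mu}$ by the alternative formula in the excerpt. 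The decisive calculation is that $\sum_\mu p_\mu\sigma_\mu=\Ptr{B}{\out{\psi}{\psi}}$: indeed $\tilde\Phi_B(\cdot)\defeq\sum_\mu\sqrt{N_\mu}(\cdot)\sqrt{N_\mu}$ is a trace-preserving quantum operation (since $\sum_\mu N_\mu=\I$), so the average equals $\Ptr{B}{(\op{id}_A\ot\tilde\Phi_B\ot\op{id}_R)(\out{\psi}{\psi})}=\Ptr{B}{\out{\psi}{\psi}}$ by trace-preservation on $B$. Since $\ket{\psi}$ is pure, this reduced state on $\cH_A\ot\cH_R$ has entropy $\rS(\rho_B)$ by the equality of entropies of complementary marginals. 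Therefore $\chi\{(p_\mu,\sigma_\mu)\}\leqslant\rS\Pa{\sum_\mu p_\mu\sigma_\mu}=\rS(\rho_B)$.

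To close, I would invoke monotonicity of the Holevo quantity under CPTP maps: applying $\Ptr{R}{\cdot}$ pointwise to $\{(p_\mu,\sigma_\mu)\}$ produces $\{(p_\mu,\rho_{A,\mu})\}$, giving $\chi\{(p_\mu,\rho_{A,\mu})\}\leqslant\chi\{(p_\mu,\sigma_\mu)\}\leqslant\rS(\rho_B)$. The main conceptual obstacle is locating this auxiliary ensemble: working on $\cH_A$ alone yields only the $\rS(\rho_A)$ bound, so one has to lift the problem from $\cH_A$ to $\cH_A\ot\cH_R$, where trace-preservation of $\tilde\Phi_B$ forces the ensemble average to collapse to a fixed pure-state marginal whose entropy equals $\rS(\rho_B)$.
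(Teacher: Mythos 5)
Your proposal is correct, but it takes a genuinely different route from the paper. You split off the trivial bound $\chi\leqslant\rS(\rho_A)$ the same way, but for the $\rS(\rho_B)$ bound the paper argues measurement-theoretically in three reductions: it first treats rank-one projective measurements by invoking Theorem 3.1 of an earlier work of two of the authors, then handles general projective measurements by refining each projector $P_{B,\mu}=\sum_i\out{u_{\mu,i}}{u_{\mu,i}}$ and using concavity of the von Neumann entropy to compare the coarse-grained ensemble with the rank-one one, and finally reduces arbitrary Kraus operators to projectors via the Naimark dilation on $\cH_B\ot\cH_C$, noting $\rS(\rho_B\ot\out{0}{0}_C)=\rS(\rho_B)$. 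You instead purify $\rho_{AB}$ to $\ket{\psi}\in\cH_A\ot\cH_B\ot\cH_R$, lift the ensemble to $\cH_A\ot\cH_R$ where trace-preservation of $\sum_\mu\sqrt{N_\mu}(\cdot)\sqrt{N_\mu}$ forces the ensemble average to equal $\Ptr{B}{\out{\psi}{\psi}}$, whose entropy is $\rS(\rho_B)$ by purity, and then descend to $\cH_A$ by monotonicity of the Holevo quantity under the CPTP map $\Ptr{R}{\cdot}$. All the individual steps check out (the normalization of $\ket{\phi_\mu}$, the identity $\Ptr{R}{\sigma_\mu}=\rho_{A,\mu}$ via the $\sqrt{N_\mu}$ form of $\rho_{A,\mu}$ given in the paper, and the complementary-marginal entropy argument). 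What your route buys is self-containedness and transparency: it avoids both the Naimark dilation and the appeal to the external base case, and it explains structurally why $\rS(\rho_B)$ appears at all. The cost is that it leans on the data processing inequality for $\chi$, i.e.\ monotonicity of relative entropy, which is itself a deep theorem; the paper's concavity step is elementary, though its base case presumably hides comparable machinery in the cited reference.
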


By using this result, we answer partly a conjecture of Fannes, de
Melo, Roga and \.{Z}yczkowski.

\section{The proof of Theorem~\ref{th:main1}}
Clearly, $\chi\set{(p_\mu,\rho_{A,\mu})} \leqslant \rS(\rho_A)$ is
trivial by the definition of the Holevo information. It remains to
prove $\chi\set{(p_\mu,\rho_{A,\mu})} \leqslant \rS(\rho_B)$. The
nontrivial part of the proof is divided into three parts as follows:

\begin{enumerate}[(i)]
\item If $\set{\ket{\psi_{B,\mu}}}_{\mu=1}^K$ is a standard orthonormal
basis of $\cH_B$ and $M_\mu = \out{\psi_{B,\mu}}{\psi_{B,\mu}}$,
then it follows, from Theorem 3.1 in \cite{Zhang} and its proof,
that $\chi\set{(p_\mu,\rho_{A,\mu})} \leqslant \rS(\rho_B)$.

\item If $M_{B,\mu} = P_{B,\mu}$, where $P_{B,\mu}$ is a projector on $\cH_B$.
Note that $\sum_\mu P_{B,\mu} = \I_B$, so there is a standard
orthonormal basis $\set{\ket{u_{\mu,i}}}$ of $\cH_B$ such that
$$
P_{B,\mu} = \sum_i \out{u_{\mu,i}}{u_{\mu,i}}
$$
for each $\mu$.

Denote $p_{\mu,i} = \Innerm{u_{\mu,i}}{\rho_B}{u_{\mu,i}}$, without
loss of generality, we assume that $p_{\mu,i} >0 $, and denote
$$
\rho_{A,\mu, i} =
p_{\mu,i}^{-1}\Innerm{u_{\mu,i}}{\rho_{AB}}{u_{\mu,i}}.
$$
Thus
\begin{eqnarray}\label{eq:sumofprob}
p_\mu &=& \Tr{\Pa{\I_A \ot P_{B,\mu}}\rho_{AB}\Pa{\I_A \ot
P_{B,\mu}}} =
\Tr{\Pa{\I_A \ot P_{B,\mu}}\rho_{AB}}\nonumber\\
&=& \sum_i \Tr{\Innerm{u_{\mu,i}}{\rho_{AB}}{u_{\mu,i}}} =\sum_i
\Innerm{u_{\mu,i}}{\rho_B}{u_{\mu,i}} = \sum_i p_{\mu,i}
\end{eqnarray}
and
\begin{eqnarray}\label{eq:substate}
\Pa{\I_A \ot P_{B,\mu}}\rho_{AB}\Pa{\I_A \ot P_{B,\mu}} =
\sum_{i,i'} \Innerm{u_{\mu,i}}{\rho_{AB}}{u_{\mu,i'}} \ot
\out{u_{\mu,i}}{u_{\mu,i'}}.
\end{eqnarray}
It follows from Eq.~\eqref{eq:sumofprob} and Eq.~\eqref{eq:substate}
that
$$
p_\mu\rho_{A,\mu} = \sum_i p_{\mu,i}\rho_{A,\mu, i}.
$$
Therefore, by the concavity of von Neumann entropy, we have
$$
p_\mu\rS(\rho_{A,\mu}) \geqslant \sum_i p_{\mu,i}\rS(\rho_{A,\mu,
i}).
$$
So,
$$
\sum_\mu p_\mu\rS(\rho_{A,\mu}) \geqslant \sum_\mu\sum_i
p_{\mu,i}\rS(\rho_{A,\mu, i}).
$$
Thus, the desired inequality is obtained.

\item Now we prove the theorem generally. By the Naimark~theorem \cite{Watrous}, there exists a
quantum system $\cH_C$, a unit vector $\ket{0_C}\in\cH_C $ and a
projector $\set{P_{BC,\mu}}$ on the bipartite system $\cH_B\ot
\cH_C$ such that $\Innerm{0_C}{P_{BC,\mu}}{0_C}= M^\dagger_{B,\mu}
M_{B,\mu}$. Thus,
\begin{eqnarray*}
p_\mu \rho_{A,\mu} &=& \Ptr{B}{\Pa{\I_A \ot \sqrt{M^\dagger_{B,\mu}
M_{B,\mu}}}\rho_{AB}\Pa{\I_A \ot
\sqrt{M^\dagger_{B,\mu} M_{B,\mu}}}} \\
&=& \Ptr{BC}{\Pa{\I_A \ot P_{BC,\mu}}\Pa{\rho_{AB}\ot
\out{0}{0}_C}\Pa{\I_A \ot P_{BC,\mu}}}.
\end{eqnarray*}

So, the quantum ensemble $\set{(p_\mu,\rho_{A,\mu})}$ which is
induced by the quantum operation $\Phi_B$ can be considered as one
which is induced by the quantum operation $\Phi_{BC} = \sum_\mu
\mathrm{Ad}_{P_{BC,\mu}}$ over $\cH_B\ot \cH_C$. Thus, it follows
from (ii) that
$$
\chi\set{(p_\mu,\rho_{A,\mu})} \leqslant \rS(\rho_B\ot \out{0}{0}_C)
= \rS(\rho_B).
$$
\end{enumerate}

\section{The conjecture of Fannes, de Melo, Roga and \.{Z}yczkowski}

Let $\cE_N = \set{(p_i,\rho_i)}_{i=1}^N$ be a quantum ensemble on a
finite dimensional quantum system $\cH$, $\rF_{ij} =
\rF(\rho_i,\rho_j) = \Pa{\Tr{\abs{\sqrt{\rho_i}\sqrt{\rho_j}}}}^2$ be
the fidelity between $\rho_i$ and $\rho_j$. The matrix
$$
\rC_{\sqrt{\rF}}(\cE_N) = \Br{\sqrt{p_i p_j\rF_{ij}}}_{ij}
$$
is said to be a \emph{correlation matrix} of the quantum ensemble
$\cE_N = \set{(p_i,\rho_i)}_{i=1}^N$.

\vskip0.1in

For $N=2$ or $3$, the correlation matrix $\rC_{\sqrt{\rF}}(\cE_N) =
\Br{\sqrt{p_i p_j\rF_{ij}}}_{ij}$ is a legitimate state. However, if
$N\geqslant4$, then $\rC_{\sqrt{\rF}}(\cE_N) = \Br{\sqrt{p_i
p_j\rF_{ij}}}_{ij}$ fails to be a positive semi-definite matrix in
general \cite{Fannes}. For $N = 2$, the correlation matrix
$$
\rC_{\sqrt{\rF}}(\cE_2)=\Br{
        \begin{array}{cc}
          p_1 & \sqrt{p_1 p_2 \rF(\rho_1,\rho_2)} \\
          \sqrt{p_1 p_2 \rF(\rho_1,\rho_2)} & p_2 \\
        \end{array}
}
$$
was shown to satisfy the following inequality \cite{Roga}:
$$
\chi(\cE_2) \leqslant \rS(\rC_{\sqrt{\rF}}(\cE_2)).
$$
Moreover, the upper bound $\rS(\rC_{\sqrt{\rF}}(\cE_2))$ is the
tighter one in the above inequality.

\vskip0.1in

Fannes, de Melo, Roga and \.{Z}yczkowski conjectured that for $N=3$,
$ \chi(\cE_3) \leqslant \rS(\rC_{\sqrt{\rF}}(\cE_3))$ is also true
\cite{Fannes}.

\vskip0.1in

In what follows, we apply Theorem~\ref{th:main1} to answer partly
the conjecture.

\begin{lem}[\cite{Hou}]\label{3block} Let $\cH_1, \cH_2$ and $\cH_3$ be
three finite dimensional complex Hilbert spaces. Then the block
operator
$$
\left[
  \begin{array}{ccc}
    A & D & E\\
    D^\dagger & B & F\\
    E^\dagger &  F^\dagger & C\\
  \end{array}
\right]
$$
defined on $\cH_1\oplus\cH_2\oplus\cH_3$ is positive semi-definite
if and only if the following statements are valid:
\begin{enumerate}[(i)]
\item $A\geqslant0,B\geqslant0,C\geqslant0$;
\item there exist three contractive operators $R_1, R_2$ and $R_3$ such that
$D = \sqrt{A}R_1\sqrt{B}, F = \sqrt{B}R_2\sqrt{C}$, and {\small
$$
E = \sqrt{A}R_1\mathrm{supp}(B)R_2\sqrt{C} + \sqrt{A -
\sqrt{A}R_1\mathrm{supp}(B)R^\dagger_1\sqrt{A}}R_3\sqrt{C -
\sqrt{C}R^\dagger_2\mathrm{supp}(B)R^\dagger_2\sqrt{C}},
$$}
where $\mathrm{supp}(B)$ stands for the support projection of $B$.
\end{enumerate}
\end{lem}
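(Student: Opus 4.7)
The plan is to prove both directions via a Gram matrix factorization combined with polar decomposition, promoting the standard $2\times 2$ characterization of positive block matrices (where $D = \sqrt{A}\,R\,\sqrt{B}$ for a contraction $R$) to the $3\times 3$ setting.

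For the forward implication, I would start from the fact that the block operator $M$ is positive semi-definite if and only if $M = T^{\dagger}T$ for some operator $T\colon \cH_1\oplus\cH_2\oplus\cH_3 \to \cK$. Writing $T = [\,T_1\ T_2\ T_3\,]$ block-columnwise, the diagonal blocks give $A = T_1^{\dagger}T_1$, $B = T_2^{\dagger}T_2$, $C = T_3^{\dagger}T_3$, establishing (i). Polar decomposition then supplies partial isometries $U_i\colon\cH_i\to\cK$ with $T_i = U_i\sqrt{X_i}$ for $X_1 = A, X_2 = B, X_3 = C$, satisfying $U_i^{\dagger}U_i = \mathrm{supp}(X_i)$. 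Setting $R_1 := U_1^{\dagger}U_2$ and $R_2 := U_2^{\dagger}U_3$ produces two contractions (each a product of partial isometries), and at once $D = T_1^{\dagger}T_2 = \sqrt{A}R_1\sqrt{B}$ and $F = \sqrt{B}R_2\sqrt{C}$, matching the first two conditions of (ii).

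The delicate step is the formula for $E = \sqrt{A}\,U_1^{\dagger}U_3\,\sqrt{C}$. I would decompose $U_3 = PU_3 + (I-P)U_3$, where $P := U_2U_2^{\dagger}$ is the range projection of $U_2$ inside $\cK$, giving
$$
U_1^{\dagger}U_3 \;=\; U_1^{\dagger}U_2U_2^{\dagger}U_3 \;+\; U_1^{\dagger}(I-P)U_3.
$$
Using $U_2^{\dagger}U_2 = \mathrm{supp}(B)$ together with $P^2 = P$, the first term rewrites as $R_1\,\mathrm{supp}(B)\,R_2$, and $U_1^{\dagger}PU_1 = R_1\,\mathrm{supp}(B)\,R_1^{\dagger}$; hence the ``left residual'' $U_1^{\dagger}(I-P)U_1$ equals $\mathrm{supp}(A) - R_1\,\mathrm{supp}(B)\,R_1^{\dagger}$, and an identical computation treats the right side. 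The remaining piece $U_1^{\dagger}(I-P)U_3$ is a contraction between the two residual subspaces, so conjugating by $\sqrt{A}$ on the left and $\sqrt{C}$ on the right yields exactly the second term of the stated formula, with $R_3$ extracted as this mediating contraction.

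For the converse, I would reverse the construction: given $A,B,C\geqslant 0$ and contractions $R_1,R_2,R_3$, pick an auxiliary Hilbert space $\cK$ large enough (e.g.\ $\cH_1\oplus\cH_2\oplus\cH_3$), let $T_2$ encode $\sqrt{B}$, and build $T_1,T_3$ from the $R_i$ and the residual square-root expressions so that $T_i^{\dagger}T_j$ reproduces each prescribed block; then $M = [\,T_1\ T_2\ T_3\,]^{\dagger}[\,T_1\ T_2\ T_3\,]\geqslant 0$ by construction. The main obstacle I anticipate is the bookkeeping around support projections: one must verify the collapses such as $R_1\,\mathrm{supp}(B)\,R_1^{\dagger} = R_1R_1^{\dagger}$, which hold because $U_2U_2^{\dagger}$ is idempotent, to convert the naive formula phrased in the large space $\cK$ into the clean expression on $\cH_2$ appearing in the lemma, and to guarantee that the residual operators $A - \sqrt{A}R_1\mathrm{supp}(B)R_1^{\dagger}\sqrt{A}$ and its counterpart on $\cH_3$ are genuinely positive so that their square roots are well-defined.
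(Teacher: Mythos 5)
The paper does not actually prove this lemma --- it is imported verbatim from the cited reference \cite{Hou} and used as a black box --- so there is no in-paper proof to compare against; your proposal therefore has to stand on its own, and it essentially does. The Gram-factorization route ($M=T^\dagger T$ with $T=[\,T_1\ T_2\ T_3\,]$, polar decompositions $T_i=U_i\sqrt{X_i}$, and $R_1=U_1^\dagger U_2$, $R_2=U_2^\dagger U_3$) is the standard and correct way to lift the $2\times2$ characterization to the $3\times3$ case, and your splitting $U_3=PU_3+(I-P)U_3$ with $P=U_2U_2^\dagger$ produces exactly the two terms of the formula for $E$ (using $U_2\,\mathrm{supp}(B)=U_2$ to convert $U_1^\dagger PU_3$ into $R_1\,\mathrm{supp}(B)\,R_2$). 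Two points deserve tightening. First, the extraction of $R_3$ is slightly more than observing that $U_1^\dagger(I-P)U_3$ is a contraction: writing $S_1=(I-P)U_1$ and $S_3=(I-P)U_3$, one needs a \emph{second} polar decomposition $S_1\sqrt{A}=W_1\sqrt{A'}$ and $S_3\sqrt{C}=W_3\sqrt{C'}$, where $A'=A-\sqrt{A}R_1\mathrm{supp}(B)R_1^\dagger\sqrt{A}=\sqrt{A}S_1^\dagger S_1\sqrt{A}$ and similarly for $C'$, so that $\sqrt{A}S_1^\dagger S_3\sqrt{C}=\sqrt{A'}\,W_1^\dagger W_3\,\sqrt{C'}$ and $R_3=W_1^\dagger W_3$; the intertwining $\sqrt{A}S_1^\dagger=\sqrt{A'}W_1^\dagger$ is what makes the conjugation argument legitimate. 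Second, your computation implicitly corrects what appears to be a typo in the statement as printed: the last residual should read $C-\sqrt{C}R_2^\dagger\,\mathrm{supp}(B)\,R_2\sqrt{C}$ (with $R_2$, not $R_2^\dagger$, on the right), which is the expression your construction actually produces and the one needed for positivity of the residual. The converse direction as you sketch it (building $T_1,T_2,T_3$ in $\cH_2\oplus\cH_1\oplus\cH_3$ from $\mathrm{supp}(B)R_1^\dagger\sqrt{A}$, $\sqrt{B}$, $\mathrm{supp}(B)R_2\sqrt{C}$ and the residual square roots) is routine once one checks $\sqrt{A}R_1\mathrm{supp}(B)R_1^\dagger\sqrt{A}\leqslant A$, which follows from $R_1$ being contractive. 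In short: correct approach, minor gaps of the bookkeeping kind you yourself flagged.
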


\vskip0.1in

\begin{lem}\label{lem:block-pos}
Let $U, V$ and $W$ be three unitary operators on finite dimensional
complex Hilbert space $\cH$ and $\I$ be the identity operator on
$\cH$. Then the operator
$$
\Br{
  \begin{array}{ccc}
    \I & U & V \\
    U^\dagger & \I  & W \\
    V^\dagger & W^\dagger & \I  \\
  \end{array}
}
$$
is positive semi-definite if and only if $V = UW$.
\end{lem}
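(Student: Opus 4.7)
The plan is to handle the two directions separately, exploiting the fact that the diagonal blocks are all equal to the identity.

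For the "if" direction, assume $V = UW$ and consider the row of operators $X = (\I,\, U,\, V)$, viewed as a linear map $\cH \oplus \cH \oplus \cH \to \cH$. I would compute $X^\dagger X$ block by block. Using $U^\dagger U = V^\dagger V = \I$ (since $U, V$ are unitary) and the hypothesis $V = UW$, which gives $U^\dagger V = U^\dagger U W = W$, one sees that $X^\dagger X$ coincides exactly with the displayed $3 \times 3$ block matrix. Since $X^\dagger X \geqslant 0$ for any operator $X$, this direction is immediate.

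For the "only if" direction, I would invoke Lemma \ref{3block} with $A = B = C = \I$, $D = U$, $F = W$, and $E = V$. In this case $\sqrt{A} = \sqrt{B} = \sqrt{C} = \I$ and $\supp(B) = \I$, so the first two relations in part (ii) of the lemma force $R_1 = U$ and $R_2 = W$ uniquely, while $R_3$ remains an arbitrary contraction. The key observation is then that, because $U$ is unitary, $R_1 R_1^\dagger = U U^\dagger = \I$, hence $\sqrt{\I - R_1 R_1^\dagger} = 0$. The entire $R_3$-summand in the formula for $E$ therefore drops out, and the expression collapses to $V = E = R_1 R_2 = UW$.

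The potential obstacle is essentially bookkeeping: one should verify that the contractivity conditions on $R_1, R_2, R_3$ demanded by Lemma \ref{3block} are automatically satisfied (unitaries are contractions), and that taking $\supp(B) = \I$ really does simplify the middle factor in the formula for $E$. Once these observations are in place, the proof reduces to a one-line application of Lemma \ref{3block}, and no further analytic input is required.
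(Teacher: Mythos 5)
Your proof is correct and follows essentially the same route as the paper: both rest on applying Lemma~\ref{3block} with $A=B=C=\I$, forcing $R_1=U$, $R_2=W$, and observing that unitarity of $U$ kills the $R_3$-term in the formula for $E$, so that $E=V=UW$. The only difference is that you verify the ``if'' direction independently via the Gram-matrix factorization $X^\dagger X$ with $X=(\I,\,U,\,V)$, whereas the paper obtains that direction for free from the ``if and only if'' in Lemma~\ref{3block}; your extra check is a harmless (and self-contained) bonus.
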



\begin{proof}
Taking $D = U, E = V, F = W$ and $A = B = C = \I$ in Lemma
\ref{3block}, we have that $R_1 = U, R_2 = W, \mathrm{supp}(B) = \I$
and $R_3$ is a contractive operator. Moreover, $V = UW$. That is
$$
\left[
  \begin{array}{ccc}
    \I & U & V \\
    U^\dagger & \I & W \\
    V^\dagger & W^\dagger & \I \\
  \end{array}
\right] \geqslant 0 \Longleftrightarrow V = UW.
$$
\end{proof}


\begin{remark}
The alternative proof of Lemma~\ref{lem:block-pos} may be given by
Theorem 3.1 in \cite{Orlov}.
\end{remark}

\begin{thrm}\label{th:main2}

Let $\cE_3 = \set{(p_1,\rho_1),(p_2,\rho_2),(p_3,\rho_3)}$ be a
quantum ensemble on the finite dimensional quantum system $\cH$. It
follows from the polar decomposition theorem that there exist three
unitary operators $V, U $ and $W$ on $\cH$ such that
\begin{eqnarray*}
\abs{\sqrt{\rho_2}\sqrt{\rho_1}} &=& U\sqrt{\rho_2}\sqrt{\rho_1}, \\
\abs{\sqrt{\rho_3}\sqrt{\rho_1}} &=& V\sqrt{\rho_3}\sqrt{\rho_1}, \\
\abs{\sqrt{\rho_3}\sqrt{\rho_2}} &=& W\sqrt{\rho_3}\sqrt{\rho_2}.
\end{eqnarray*} If $V = UW$, then
$$
\chi(\cE_3)\leqslant \rS(\rC_{\sqrt{\rF}}(\cE_3)).
$$
\end{thrm}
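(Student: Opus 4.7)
The proof strategy is to realize the given ensemble $\cE_3$ as the one induced on $\cH_A$ by a projective measurement on $\cH_B$ applied to a carefully chosen bipartite state $\rho_{AB}$, and then appeal to Theorem~\ref{th:main1} to obtain $\chi(\cE_3)\le\rS(\rho_B)$. The construction will be tailored so that the reduced state $\rho_B$ on $\cH_B$ agrees with the correlation matrix $\rC_{\sqrt{\rF}}(\cE_3)$ written in the measurement basis, and the hypothesis $V=UW$ is precisely what makes this possible.

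In detail, I introduce an ancillary system $\cH_C\cong\cH$ and, with the unitaries from the polar decompositions, define
$$\Ket{\phi_i}_{AC} = \sum_k \Pa{\sqrt{\rho_i}\microspace V_i\ket{k}}_A\otimes \ket{k}_C, \qquad i=1,2,3,$$
where $V_3=\I$, $V_2=W$, $V_1=V$. A direct partial trace shows that each $\ket{\phi_i}$ purifies $\rho_i$, while a one-line computation using cyclicity of the trace yields
$$\Inner{\phi_j}{\phi_i} = \Tr\Pa{\sqrt{\rho_j}\sqrt{\rho_i}\microspace V_iV_j^\dagger}.$$
For $(i,j)=(1,3)$ and $(2,3)$ one has $V_1V_3^\dagger=V$ and $V_2V_3^\dagger=W$, which are by hypothesis the polar unitaries of $\sqrt{\rho_3}\sqrt{\rho_1}$ and $\sqrt{\rho_3}\sqrt{\rho_2}$, respectively. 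For $(i,j)=(1,2)$ one has $V_1V_2^\dagger=VW^\dagger$, and precisely the assumption $V=UW$ turns this into $U$, the polar unitary of $\sqrt{\rho_2}\sqrt{\rho_1}$. Consequently $\Inner{\phi_j}{\phi_i}=\Tr\abs{\sqrt{\rho_j}\sqrt{\rho_i}}=\sqrt{\rF_{ij}}$ for every pair $i,j$.

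Next I form $\Ket{\psi}_{ABC}=\sum_i\sqrt{p_i}\Ket{\phi_i}_{AC}\Ket{i}_B$, set $\rho_{AB}=\Ptr{C}{\out{\psi}{\psi}}$, and apply the projective measurement on $\cH_B$ with Kraus operators $M_{B,\mu}=\out{\mu}{\mu}$. Expanding the partial trace gives
$$\rho_{AB} = \sum_{i,j}\sqrt{p_ip_j}\,\Ptr{C}{\out{\phi_i}{\phi_j}}\otimes \out{i}{j}_B,$$
from which one reads off, following the recipe of Case~(i) in the proof of Theorem~\ref{th:main1}, that the induced ensemble on $\cH_A$ is exactly $\set{(p_i,\rho_i)}_{i=1}^3$ and that
$$\rho_B = \sum_{i,j}\sqrt{p_ip_j}\,\Inner{\phi_j}{\phi_i}\out{i}{j} = \rC_{\sqrt{\rF}}(\cE_3).$$
Theorem~\ref{th:main1} then delivers $\chi(\cE_3)\le\rS(\rho_B)=\rS(\rC_{\sqrt{\rF}}(\cE_3))$.

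The main obstacle is the simultaneous alignment of three purifications so as to saturate Uhlmann's bound for every pair. Two purifications can always be aligned, which is why the $N=2$ case of the inequality is unconditional; for three states, the three Uhlmann alignments impose a compatibility condition on the polar unitaries, and $V=UW$ is exactly that condition. Once this is recognized, the remainder amounts to fixing sign/phase conventions so that the off-diagonal entries of $\rho_B$ come out real and non-negative and thus literally coincide with $\rC_{\sqrt{\rF}}(\cE_3)$; nothing else in the argument is subtle.
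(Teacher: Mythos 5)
Your proposal is correct, and it builds exactly the same bipartite operator as the paper: tracing out your ancilla $\cH_C$ gives $\Ptr{C}{\out{\phi_i}{\phi_j}} = \sqrt{\rho_i}\,V_iV_j^\dagger\sqrt{\rho_j}$, so with $V_1=V$, $V_2=W$, $V_3=\I$ and the hypothesis $V=UW$ your $\rho_{AB}$ has precisely the blocks $\sqrt{p_ip_j}\sqrt{\rho_i}\,X_{ij}\sqrt{\rho_j}$ ($X_{12}=U$, $X_{13}=V$, $X_{23}=W$) that the paper writes down, and both proofs finish by computing $\rho_A=\sum_i p_i\rho_i$, $\rho_B=\rC_{\sqrt{\rF}}(\cE_3)$ and invoking Theorem~\ref{th:main1} for the projective measurement in the basis $\set{\ket{i}_B}$. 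The genuine difference is how positivity of $\rho_{AB}$ is certified. The paper factors $\rho_{AB}=D\,M\,D$ with $M$ the $3\times 3$ block matrix of unitaries and appeals to Lemma~\ref{lem:block-pos} (itself resting on the Hou--Gao criterion, Lemma~\ref{3block}) to get $M\geqslant 0$ from $V=UW$. You instead obtain $\rho_{AB}$ as a reduced state of the pure tripartite vector $\ket{\psi}_{ABC}$, so positivity is automatic; in effect you are exhibiting $M$ in the Gram form $\bigl[V_iV_j^\dagger\bigr]_{ij}$, which is form (b) of the paper's Remark~3.5, and the hypothesis $V=UW$ is exactly the compatibility condition needed to write the three polar unitaries as such ``coboundaries.'' Your route is slightly more self-contained (no operator-matrix positivity lemma needed) and makes the Uhlmann-alignment interpretation of the hypothesis transparent; the paper's route isolates the positivity criterion as a reusable lemma and its $K\times K$ generalization. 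One small presentational point: you should state explicitly that each $\ket{\phi_i}$ is normalized and that $\sip{\mu_B}{\rho_{AB}|\mu_B\rangle}=p_\mu\rho_\mu$, but both are the one-line computations you indicate, so there is no gap.
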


\begin{proof}
By the conditions, it follows that
\begin{eqnarray*}
\Tr{\sqrt{\rho_1}U\sqrt{\rho_2}} &=& \sqrt{\rF_{12}},\\
\Tr{\sqrt{\rho_1}V\sqrt{\rho_3}} &=& \sqrt{\rF_{13}},\\
\Tr{\sqrt{\rho_2}W\sqrt{\rho_3}} &=& \sqrt{\rF_{23}}.
\end{eqnarray*}

Let $\cH_A=\cH$, $\cH_B=\mathbb{C}^3$, and
$$
\rho_{AB} = \Br{
               \begin{array}{ccc}
                 p_1\rho_1 & \sqrt{p_1p_2}\sqrt{\rho_1}U\sqrt{\rho_2} & \sqrt{p_1p_3}\sqrt{\rho_1}V\sqrt{\rho_3} \\
                 \sqrt{p_1p_2}\sqrt{\rho_2}U^\dagger\sqrt{\rho_1} & p_2\rho_2 & \sqrt{p_2p_3}\sqrt{\rho_2}W\sqrt{\rho_3} \\
                 \sqrt{p_1p_3}\sqrt{\rho_3}V^\dagger\sqrt{\rho_1} & \sqrt{p_2p_3}\sqrt{\rho_3}W^\dagger\sqrt{\rho_2} & p_3\rho_3 \\
               \end{array}
             }.
$$
Now, we only need to show that $\rho_{AB}$ is a positive
semi-definite operator on $\cH_A\ot\cH_B$. Note that
$$ \rho_{AB} =\Br{
               \begin{array}{ccc}
                 \sqrt{p_1\rho_1} & 0 & 0 \\
                 0 & \sqrt{p_2\rho_2} & 0 \\
                 0 & 0 & \sqrt{p_3\rho_3} \\
               \end{array}
             }\Br{
               \begin{array}{ccc}
                 \I & U & V \\
                 U^\dagger & \I & W \\
                 V^\dagger & W^\dagger & \I \\
               \end{array}
             }\Br{
               \begin{array}{ccc}
                 \sqrt{p_1\rho_1} & 0 & 0 \\
                 0 & \sqrt{p_2\rho_2} & 0 \\
                 0 & 0 & \sqrt{p_3\rho_3} \\
               \end{array}
             },
$$
and it follows from Lemma~\ref{lem:block-pos} that $\rho_{AB}
\geqslant 0$ is equivalent to
$$ \Br{
               \begin{array}{ccc}
                 \I & U & V \\
                 U^\dagger & \I & W \\
                 V^\dagger & W^\dagger & \I \\
               \end{array}
             } \geqslant 0 \Longleftrightarrow V = UW.
$$
Moreover, it is easy to show that
$$
\rho_A = \Ptr{B}{\rho_{AB}} = \sum_{i=1}^3 p_i\rho_i,\quad \rho_B =
\rC_{\sqrt{\rF}}(\cE_3).
$$
Since $\dim(\cH_B) = 3$, take a standard orthogonal basis
$\set{\ket{\mu_B}}$ of $\cH_B$ such that $p_\mu \rho_{A,\mu} =
\Innerm{\mu_B}{\rho_{AB}}{\mu_B}$. By Theorem~\ref{th:main1}, we
have
$$
\chi(\cE_3) = \chi\set{(p_\mu,\rho_{A,\mu})}\leqslant \rS\Pa{\rho_B}
= \rS(\rC_{\sqrt{\rF}}(\cE_3)).
$$
This completes the proof.
\end{proof}

\begin{remark}
In fact, Lemma~\ref{lem:block-pos} can be easily generalized to the
case where 3-by-3 block matrix is replaced by $K$-by-$K$
($K\geqslant 3$) block matrix of unitary entries. The generalization
is described as follows:

\vskip0.1in

Assume that the following $K\times K$ block matrix of unitary
entries is positive semi-definite:
$$
\left[
              \begin{array}{cccc}
                U_{11} & U_{12} & \cdots & U_{1K} \\
                U_{21} & U_{22} & \cdots & U_{2K} \\
                \vdots & \vdots & \ddots & \vdots \\
                U_{K1} & U_{K2} & \cdots & U_{KK} \\
              \end{array}
            \right]\equiv P\geqslant0.
$$
Then these unitary operators satisfy the conditions:
\begin{enumerate}
\item[$\bullet$] $U_{ii} = \I$ for each index $i$; $U_{ji} = U^\dagger_{ij}$ for all indices $i,j$.
\end{enumerate}
Thus
$$
P = \left[
              \begin{array}{cccc}
                \I & U_{12} & \cdots & U_{1K} \\
                U^\dagger_{12} & \I & \cdots & U_{2K} \\
                \vdots & \vdots & \ddots & \vdots \\
                U^\dagger_{1K} & U^\dagger_{2K} & \ldots & \I \\
              \end{array}
            \right].
$$
Furthermore, we have that $P$ is of the following forms:
\begin{enumerate}[(a)]
\item{\small
$$
P = \left[
  \begin{array}{ccccccc}
    \I & U_1 & U_1U_2 & U_1U_2U_3 & \ldots & \ldots & U_1U_2\cdots U_{K-1} \\
    U^\dagger_1 & \I & U_2 & U_2U_3 & U_2U_3U_4 & \ddots & \vdots \\
    U^\dagger_2U^\dagger_1 & U^\dagger_2 & \I & U_3 & U_3U_4 & \ddots & \vdots \\
    U^\dagger_3U^\dagger_2U^\dagger_1 & U^\dagger_3U^\dagger_2 & U^\dagger_3 & \I & \ddots & \ddots &  U_{K-3}U_{K-2}U_{K-1} \\
    \vdots & \ddots & \ddots & \ldots & \ddots & \ddots & U_{K-2}U_{K-1} \\
    \vdots & \ddots & \ddots & \ldots & \ddots & \ddots & U_{K-1} \\
     U^\dagger_{K-1}\cdots U^\dagger_2U^\dagger_1& \ldots & \ldots & U^\dagger_{K-1}U^\dagger_{K-2}U^\dagger_{K-3} & U^\dagger_{K-1}U^\dagger_{K-2} & U^\dagger_{K-1} & \I \\
  \end{array}
\right]
$$
} for a collection of unitary operators $\Set{U_i: i =
1,\ldots,K-1}$ on $\cH$,
\end{enumerate}
or
\begin{enumerate}[(b)]
\item
$$
P = \left[
      \begin{array}{c}
        V_1 \\
        V_2 \\
        \vdots \\
        V_K \\
      \end{array}
    \right]\left[
             \begin{array}{cccc}
               V^\dagger_1 & V^\dagger_2 & \cdots & V^\dagger_K \\
             \end{array}
           \right]
$$
for a collection of unitary operators $\Set{V_i: i = 1,\ldots,K}$ on
$\cH$.
\end{enumerate}

The outline of the proof is the following. The fact that $P$ is of
the form (a) can be easily derived by applying repeatedly the
Theorem 3.1 in \cite{Orlov} to a block matrix. Indeed, we first
apply it to the new block matrix:
$$
\Br{\begin{array}{cc}
      \I & X \\
      X^\dagger & A
    \end{array}
}\geqslant0,
$$
where $X = \Br{U_{12},\ldots,U_{1K}}$ and
$$
A = \Br{\begin{array}{cccc}
                \I & U_{23} & \cdots & U_{2K} \\
                U^\dagger_{23} & \I & \cdots & U_{3K} \\
                \vdots & \vdots & \ddots & \vdots \\
                U^\dagger_{2K} & U^\dagger_{3K} & \ldots & \I \\
              \end{array}}.
$$
Then apply it again to a similar block structure for $A$, and so on.
Finally we obtain the form (a) of $P$. The forms (a) and (b) are
equivalent via the following identification:
$$
U_1 = V_1V_2^\dagger,U_2=V_2V_3^\dagger,
\ldots,U_{K-1}=V_{K-1}V_K^\dagger.
$$
\end{remark}

\section{Concluding remarks}

In this Letter, we obtained a universal upper bound for the Holevo
quantity which is induced by a quantum operation and proved that for
a given quantum ensemble which consists of $N$ quantum states on the
same space, a so-called correlation matrix $\rC_{\sqrt{\rF}}(\cE_N)$
can be constructed. Its von Neumann entropy is shown to be an upper
bound of the Holevo quantity for $N=3$ under some constraints. We
also generalized Lemma~\ref{lem:block-pos} and obtained an
interesting characterization of positivity of special operator
matrix, which may shed new light on solving other related problems
in quantum information theory.

\subsection*{Acknowledgement} 

The authors wish to express their thanks to the referees for their valuable comments and suggestions. This project is supported by the Natural Science Foundation of China (Grant Nos. 11171301, 10771191, 10471124 and 11275131) and Natural Science Foundation of Zhejiang Province of China (Grant Nos. Y6090105).




\end{document}